\theoremstyle{thmstyleone}%
\newtheorem{theorem}{Theorem}
\newtheorem{proposition}[theorem]{Proposition}%
\theoremstyle{thmstyletwo}%
\theoremstyle{thmstylethree}%
\newtheorem{definition}{Definition}%
\begin{document}

\title[Article Title]{Distributed quantum algorithm for the dihedral hidden subgroup problem}


\author[1]{\fnm{Pengyu} \sur{Yang}}

\affil*[1]{\orgdiv{College of Computer and Cyber Security}, \orgname{Fujian Normal University}, \orgaddress{\city{Fujian}, \postcode{350117},  \country{China}}}

\author[2]{\fnm{Xin} \sur{Zhang}}

\affil[2]{\orgdiv{School of  Mathematics and Statistics}, \orgname{Fujian Normal University}, \orgaddress{\city{Fujian}, \postcode{350117},  \country{China}}}

\author*[1]{\fnm{Song} \sur{Lin}}\email{lins95@gmail.com}



\abstract{To address the issue of excessive quantum resource requirements in Kuperberg's algorithm for the dihedral hidden subgroup problem, this paper proposes a distributed algorithm based on the function decomposition. By splitting the original function into multiple subfunctions and distributing them to multiple quantum nodes for parallel processing, the algorithm significantly reduces the quantum circuit depth and qubit requirements for individual nodes. Theoretical analysis shows that when $n\gg t$ ($t$ is the number of quantum nodes), the time complexity of the distributed version is optimized from $2^{O(\sqrt{n})}$ (the traditional algorithm's complexity) to $2^{o(\sqrt{n-t})}$. Furthermore, we carried out the simulation on the Qiskit platform, and the accuracy of the algorithm is verified. Compared to the original algorithm, the distributed version not only reduces the influence of circuit depth and noise, but also improves the probability of measurement success.}

\keywords{Distributed quantum computing, Dihedral hidden subgroup problem, Hidden shift problem, Function resolution}



\maketitle

\section{Introduction}\label{sec1}
\par Quantum computing is widely considered a key technology to break through the limits of classical computing due to its exponential acceleration potential in specific problems, such as Shor's algorithm\cite{Shor1994}, Grover's algorithm\cite{Grover-QMHS-PRL-1997}, and so on\cite{Harrow-QLSE-PRL-2009,Biamonte-QML-Nature-2017}. However, the current mainstream noisy intermediate-scale quantum (NISQ) devices \cite{Preskill-QCNISQ-Quantum-2018,Greinert-ECFQT-Zenodo-2024} are limited by the number of qubits, connectivity and noise interference, which means that it is difficult to support the large-scale quantum algorithms. In order to solve these problems, distributed quantum computing \cite{Caleffi-QINet-ACM-2018} comes into being. It is a new architecture that combines distributed computing and quantum computing, allowing different quantum processor nodes to communicate and cooperate to complete computing tasks. This "divide and conquer" strategy is able to make full use of existing NISQ devices to achieve quantum speedup while maintaining the overall efficiency of the algorithm.
\par In 2021, J.Avron et al. \cite{Avron-QANR-PRA-2021} proposed a specific distributed computing scheme, which decomposed the calculated Boolean function into multiple sub-functions and ran these on different quantum devices. Specifically, J.Avron et al. split the Boolean functions calculated in Grover's algorithm, Simon's algorithm and Deutsch-Jozsa algorithm to design the corresponding distributed quantum computing scheme. In addition, related experiments show that they not only reduce the depth of circuits, but also reduce the noise significantly. In 2022, Tan et al. \cite{Tan-DQSP-PRA-2022} improved the distributed Simon's algorithm, which reduced the complexity to $O({n})$ and solved the problem that the above distributed Simon algorithm could not expand the nodes to more than $2$. Compared with the original Simon's algorithm, the circuit depth is reduced from $O(n)$ to $O(n-t)$ ($2^t$ is the number of nodes). Subsequently, the team improved the distributed Grover's algorithm \cite{Qiu-DGrover-TCS-2024}, which required a smaller number of qubits and had a linear advantage in time complexity. Based on the exact Grover algorithm and the distributed scheme of splitting the original function, Zhou et al. \cite{Zhou-DExGrover-FP-2023} proposed the exact distributed Grover algorithm, which, like the exact Grover algorithm, can theoretically find the target state with 100\%. It is worth noting that the actual circuit depth of the algorithm is $8(n\ \mod\ 2)+9$, which is smaller than the circuit depth of the original and modified Grover algorithm, respectively. In addition, due to the shallow depth of the circuit, it is more resistant to depolarizing channel noise than several other Grover's algorithms. The above algorithms all reflect that when the original function is easy to split or satisfies a certain paradigm, the corresponding distributed algorithm can be naturally developed, which provides a good idea for designing distributed quantum algorithms. However, current research mainly focuses on Boolean function problems, and its adaptability to problems with complex algebraic structures remains to be explored.
\par The dihedral hidden subgroup problem (DHSP) is one of the key challenges in the field of quantum computing, and its efficient solution is of great significance for cracking lattice-based cryptosystems \cite{Ettinger-HSSA-ARXIV-1999}. Its goal is to find the generator of a subgroup from a black box function that hides the subgroup of a dihedral group. In 1998, Mark Ettinger and Peter Høyer pointed out that the query complexity required to solve this problem on classical computers is exponential \cite{Regev-QCLP-SIAM-2004}, and then proposed a quantum algorithm. Although the complexity of the quantum algorithm is polynomial, it has to call $o(2^n)$ times to solve such problems, so the total complexity is $o(2^n)$. In 2005, Kuperberg \cite{Kuperberg-SDHS-SIAM-2005} proposed an algorithm to solve DHSP with $2^{O(\sqrt{n})}$ time and space complexity. The hidden shift problem is another computing problem, whose aim is to solve it efficiently by using the parallelism of quantum algorithm through the periodicity or displacement property of the function. It is worth mentioning that under certain conditions, DHSP can be reduced to hidden shift problem. In addition, Kuperberg has also mentioned that his algorithm can solve the hidden shift problem. But as the scale of the problem increases, the complexity still requires much more qubits and circuit depth than the NISQ device can carry. However, existing distributed quantum algorithms have not systematically solved the parallelization requirements of such complex algebraic problems. 
\par To solve the above problems, we propose a distributed Kuperberg algorithm based on function decomposition. By splitting the original function into multiple sub-functions and assigning them to independent quantum nodes, the algorithm realizes task parallelism and resource decentralization. At the same time, it combines the quantum sorting network to optimize the efficiency of cross-node communication. Theoretical analysis shows that when $n\gg t$, the time complexity of the distributed version is optimized to $2^{O(\sqrt{n})}$ ($t$ is the number of nodes). Moreover, the circuit depth of single node is significantly reduced. In addition, the experimental results on the Qiskit platform also verify the feasibility of the algorithm.
\par The rest of this article is organized as follows. In Sect.\ref{sec2}, we review the Kuperberg's algorithm and mainly describes the algorithm flow. In Sect. \ref{sec3}, the distributed Kuperberg algorithm is proposed and the related mathematical proofs are given. Furthermore, we have disigned the related quantum circuit implementations and the experimental simulation is completed in Sect. \ref{sec4}. Finally, a summary of this paper is given.

\section{Preliminaries}\label{sec2}
\subsection{Dihedral hidden subgroup problem}\label{subsec2_1}

\begin{definition}[Dihedral group\cite{MacLane-Algebra-AMS-2023}]
The dihedral group $D_N$ is a symmetric group of a regular polygon, with $2N$ elements.
\end{definition}

\par $D_N$ contains all the symmetry transformations of the positive $N$-edge, including rotational symmetry and reflection symmetry, where the rotation angle 2$\pi/N$. $D_N$ can be defined as the semi- direct product of the second-order group consisting of the $N$th-order cyclic group $Z_N$ with the self-isomorphic reflection $s:x\mapsto x^{-1}$ on $Z_N$. The generating element of the Nth-order cyclic group $Z_N$ is $r$, and $D_N$ can be generated by $r$ and $s$, i.e. $D_N\cong Z_N\times\{e,s\}$. the elements of $D_N$ can be uniquely represented as $r^xs^h$ , $0\leq x\leq N- 1$, $h= 0, 1$, with the relational equation: $r^N=s^2=$ $srsr=e$, with e being the unit element.

\begin{proposition}\label{pr_1}
The dihedral group \(D_{N} \cong Z_{N} \times \{ e,s\}\) is isomorphic to the
semi-direct product of two cyclic groups \(Z_{N}\) and \(Z_{2}\), i.e. \(D_{N} \cong Z_{N} \times Z_{2}\).
\end{proposition}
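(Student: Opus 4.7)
The plan is to exhibit an explicit isomorphism between $D_N$ and the (external) semi-direct product $Z_N \rtimes_\varphi Z_2$, where $\varphi:Z_2\to\mathrm{Aut}(Z_N)$ sends the nontrivial element to the inversion automorphism $x\mapsto -x$. The notational convention of the paper writes this twisted product as $Z_N\times\{e,s\}$, so I would begin by clarifying that the intended structure on the right-hand side is the semi-direct rather than the direct product, and flag the choice of action $\varphi$ induced by the defining relation $srs = r^{-1}$.

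First I would isolate the two canonical subgroups inside $D_N$: the rotation subgroup $R:=\langle r\rangle$ of order $N$, isomorphic to $Z_N$, and the reflection subgroup $S:=\langle s\rangle$ of order $2$, isomorphic to $Z_2$. Using the relation $srs^{-1}=r^{-1}$, I would verify that $R$ is normal in $D_N$, while $S$ is not (unless $N\le 2$), which is exactly the asymmetry that forces the product to be semi-direct rather than direct. Next, using the fact that every element of $D_N$ admits a unique expression $r^x s^h$ with $0\le x\le N-1$ and $h\in\{0,1\}$, I would show $R\cap S=\{e\}$ and $RS=D_N$.

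With those pieces in place, I would invoke the internal semi-direct product recognition criterion: whenever a group $G$ has a normal subgroup $H$ and a subgroup $K$ with $H\cap K=\{e\}$ and $HK=G$, one has $G\cong H\rtimes_\varphi K$, where $\varphi(k)(h)=khk^{-1}$. Applying this with $H=R$ and $K=S$, the induced action is precisely $\varphi(s)(r^x)=sr^xs^{-1}=r^{-x}$, i.e.\ inversion on $Z_N$. I would then write down the explicit map
\begin{equation*}
\psi: Z_N \rtimes_\varphi Z_2 \longrightarrow D_N, \qquad \psi(x,h)=r^x s^h,
\end{equation*}
and check homomorphy directly from the semi-direct product law $(x_1,h_1)(x_2,h_2)=(x_1+(-1)^{h_1}x_2,\,h_1+h_2)$, which matches the computation $r^{x_1}s^{h_1}\cdot r^{x_2}s^{h_2}=r^{x_1+(-1)^{h_1}x_2}s^{h_1+h_2}$ in $D_N$. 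Bijectivity follows from the uniqueness of the normal form $r^x s^h$.

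The main obstacle is essentially conceptual rather than technical: making sure the reader sees that the product is \emph{semi}-direct, since a genuine direct product $Z_N\times Z_2$ would be abelian, whereas $D_N$ is non-abelian for $N\ge 3$. Once the nontrivial action $\varphi$ is pinned down via $srs^{-1}=r^{-1}$, the remaining verifications are short and essentially bookkeeping on the normal form.
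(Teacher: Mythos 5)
Your proof is correct and is essentially the argument the paper relies on implicitly: the paper states this proposition without a separate proof, having already introduced the normal form $r^x s^h$, the relations $r^N=s^2=srsr=e$, and the inversion action $s:x\mapsto x^{-1}$ in the preceding paragraph, so your internal semi-direct product recognition argument (normal rotation subgroup, order-two reflection subgroup, trivial intersection, inversion action) is exactly the standard route being taken for granted. Your observation that the product must be read as semi-direct (written $\rtimes$) rather than direct — since $Z_N\times Z_2$ is abelian while $D_N$ is not for $N\ge 3$ — is a worthwhile clarification of the paper's notation.
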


By proposition \ref{pr_1} one can denote the elements of dihedral group \(D_{N}\) by \((b,d)\), where \(b \in\)\{0,1\}, \(d \in \{ 0,1,2,\cdots,N - 1\}\). When \(b = 0\), call \((b,d)\) a rotation of the dihedral group, and when \(b = 1\), call \((b,d)\) a reflection of the dihedral group.
\begin{proposition}
If \(N\) is even, two
subgroups \(\{(0,2x),(1,2x)|x \in Z_{N/2}\}\) and \(\{(0,2x),(1,2x + 1)|x \in Z_{N/2}\}\) about the dihedral group \(D_{N}\) are isomorphic to \(D_{N/2}\).
\end{proposition}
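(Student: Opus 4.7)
The plan is to work directly with the concrete coordinatization $(b,d)\in\{0,1\}\times Z_N$ given by the identification $(b,d)\leftrightarrow r^d s^b$ and to exhibit an explicit group isomorphism from each candidate subgroup onto $D_{N/2}$. The first task is to pin down the multiplication in these coordinates. From the defining relations $r^N=s^2=srsr=e$ one extracts $sr=r^{-1}s$, and iterating gives $s^{b_1}r^{d_2}=r^{(-1)^{b_1}d_2}s^{b_1}$, so
\begin{equation*}
(b_1,d_1)\cdot(b_2,d_2)=\bigl(b_1+b_2\bmod 2,\; d_1+(-1)^{b_1}d_2\bmod N\bigr).
\end{equation*}
This formula is the workhorse; everything else is a bookkeeping check against it.

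For the first subgroup $H_1=\{(0,2x),(1,2x):x\in Z_{N/2}\}$, I would first verify closure by a short four-case computation (the four combinations of $b_1,b_2\in\{0,1\}$), noting that evenness of the second coordinate is preserved because $-2x_2\equiv 2(-x_2)\pmod N$ and $N$ is even. Then I would define $\phi_1:H_1\to D_{N/2}$ by $\phi_1(b,2x)=(b,x)$, where the second coordinate is taken modulo $N/2$, and check on the same four cases that $\phi_1$ respects multiplication using the product formula above. Bijectivity is immediate from the parametrization, so $H_1\cong D_{N/2}$ follows.

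For $H_2=\{(0,2x),(1,2x+1):x\in Z_{N/2}\}$, closure again reduces to four cases; the delicate one is $(1,2x_1+1)\cdot(1,2x_2+1)=(0,2x_1+1-2x_2-1)=(0,2(x_1-x_2))$, where the stray $+1$'s cancel precisely because the signs introduced by the $(-1)^{b_1}$ factor alternate. I would then define $\phi_2:H_2\to D_{N/2}$ by $\phi_2(0,2x)=(0,x)$ and $\phi_2(1,2x+1)=(1,x)$, i.e.\ strip off the offset in the reflection part, and verify the homomorphism property on each of the same four cases.

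The main obstacle I expect is keeping the arithmetic on the second coordinate coherent: the top group works modulo $N$ while the target works modulo $N/2$, and the reduction map $Z_N\to Z_{N/2}$ by halving is only well-defined on even residues, so one must use the fact that $N$ is even to ensure that subtraction such as $2x_1-2x_2$ stays representable in the subgroup and that $\phi_2$ is genuinely well-defined when $2x+1$ wraps around (e.g.\ $2(N/2-1)+1=N-1$ and adding $1$ inside a product sends this back to an even residue, matching $b_1+b_2=0$). Once the even-order hypothesis is used at these two points, the rest is routine verification.
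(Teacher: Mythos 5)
Your proof is correct. The paper states this proposition without any proof, so there is nothing to compare against; your explicit verification --- deriving the multiplication law $(b_1,d_1)(b_2,d_2)=(b_1+b_2 \bmod 2,\; d_1+(-1)^{b_1}d_2 \bmod N)$, checking closure in four cases, and exhibiting the halving maps $\phi_1,\phi_2$ --- supplies exactly what the paper omits, and the two places where you invoke evenness of $N$ (well-definedness of halving on even residues, and the cancellation of the $+1$ offsets in the reflection--reflection product inside $H_2$) are indeed the only points where that hypothesis is needed.
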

The dihedral hidden subgroup problem is described as follows: given a function \(h:D_{N} \rightarrow R\), where \(R\) is any set. This function \(h\) is invariant on the set of chaperones of the subgroup \(H \subseteq D_{N}\) and has different values on different chaperones, i.e. \(\forall c_{1},c_{2} \in D_{N}\), \(h\left( c_{1} \right) = h\left( c_{2} \right) \Leftrightarrow c_{1}H = c_{2}H\). Then DHSP is to find the subgroup \(H\) about this function\(h\). In 1999, Ettinger and Hoyer \cite{Ettinger-HSSA-ARXIV-1999} showed that when the subgroups \(H = \{(0,0),(1,d)\}\), \(H\) are generated by reflections \((1,d)\), DHSP can be reduced to the problem of finding the slope of reflections\(d\) of the generating elements of the implied subgroup \(H\) of the dihedral group, \(0 \leq d \leq N - 1\). Therefore, designing an efficient algorithm to obtain the slope becomes the key to the solution of DHSP.

\subsection{Consistency of the hidden shift problem with the dihedral hidden subgroup problem}\label{subsec2_2}
\begin{definition}[Hidden Shift Problem]
Given a group \((G, + )\), an output set \(A\), and two one-shot functions\(f,g:G \rightarrow A\). Suppose there exists an unknown \(a \in G\) such that \(f(x) = g(x + a)\) is satisfied for all \(x \in G\). The goal of the hidden shift problem is to find the shift \(a\).
\end{definition}

In 2005, Childs et al.\cite{Childs-QGHSP-ARXIV-2005} investigated the general case of the above problem and showed that when the finite group \(G\) is \(Z_{N}\), the hidden shift problem of finding the unknown displacement \(a \in Z_{N}\) is equivalent to the dihedral hidden subgroup problem. Therefore, in this case \cite{Moore-PFSAH-SIAM-2007,Castryck-FAHS-PQCrypto-2021}, if there exists an algorithm that can efficiently solve the dihedral implicit subgroup problem, it can efficiently solve the hidden shift problem, and vice versa. Specifically, define the function \(h:D_{N} \rightarrow A\) on the dihedral group, and let the subgroup of the function \(h\)
be \(H = \{(0,0),\{ 1,d\}\}\) ,\(H\) is generated by the reflection \(\{ 1,d\}\) when \(h(0,x)\) is injective, and according to the properties of dihedral groups it is known that:
\begin{align}
h(0,x) = h(1,x + d)
\end{align}
Define a monomial function on two cyclic groups \(f:Z_{N} \rightarrow A\), \(g:Z_{N} \rightarrow A\), \(A\) as any set of outputs, \(N\)=\(2^{n}\). By Property 3.1, we can make \(f(x) = h(0,x)\) , \(g(x) = h(1,x)\) , then equation 1 is equivalent:
\begin{align}
f(x) = g(x + d)
\end{align}
In other words, the reflection slope $d$ of the generator of the dihedral hidden subgroup problem is the same as the displacement $a$ of the hidden shift problem.

\subsection{Kuperberg's algorithm}\label{subsec2_3}
According to (Section 2.2) Sect. \ref{subsec2_2}, the hidden subgroup problem solved by Kuperberg's algorithm can be equated to the hidden shift problem. Next, the Kuperberg's algorithm will be described from the perspective of solving the hidden shift problem, and the specific flow is shown in \ref{Algorithm 1}. First, assume that there exists an Oracle that can efficiently query the values of functions $f$ and $g$. The algorithm can be used to solve the hidden subgroup problem:
\begin{align}
|b\rangle|x\rangle|y\rangle\overset{o}{\operatorname*{\rightarrow}}
\begin{cases}
|0\rangle|x\rangle|y\oplus f(x)\rangle \ if \ b=0 \\
|1\rangle|x\rangle|y\oplus g(x)\rangle \ if \ b=1 & 
\end{cases}
\end{align}

\begin{algorithm}[!ht]
\setlength{\abovedisplayskip}{1.2pt}
\setlength{\belowdisplayskip}{1.2pt}
    \caption{Kuperberg's algorithm}\label{algorithm2}\label{Algorithm 1}
	\KwIn{positive integers \(N\)=\(2^{n}\), black boxes about functions \(f:Z_{N} \rightarrow R\), \(g:Z_{N} \rightarrow R\), \(R\) for any set.}
	\KwOut{positive integers \(N\)=\(2^{n}\), black boxes about functions \(f:Z_{N} \rightarrow R\), \(g:Z_{N} \rightarrow R\), \(R\) for any set.}
    step 1: Prepare \( |0\rangle|0^{\otimes n}\rangle|0^{\otimes m}\rangle \), apply \( H^{\otimes(n+1)} \) to the first and second registers:  
   \[|\psi\rangle = \frac{1}{\sqrt{2}}(|0\rangle + |1\rangle) \otimes \frac{1}{\sqrt{2^n}}\sum_{x \in \{0,1\}^n}|x\rangle|0\rangle.\]\\
    step 2: Query the oracle:  
    \[
    |\psi\rangle = \frac{1}{\sqrt{2^{n+1}}}\sum_{x \in \{0,1\}^n} \left(|0\rangle|x\rangle|f(x)\rangle + |1\rangle|x\rangle|g(x)\rangle\right).
    \]\\
    step 3: Measure the third register, collapsing to \( y_0 \)Collapse first and second registers to get:  
    \[
     |\psi\rangle = \frac{1}{\sqrt{2}}\left(|0\rangle|x_0\rangle + |1\rangle|x_0 + a\rangle\right).
     \]  \\
    step 4: Apply Quantum Fourier Transform (QFT) to the second register:
    \[|\psi\rangle=\frac{1}{\sqrt{2^{n+1}}}(\sum_{j=0}^{2^{n}-1}e^{2\pi i jx_{0}/2^{n}}|0\rangle|j\rangle+\sum_{k=0}^{2^{n}-1}e^{2\pi ik(x_{0}+a)/2^{n}}|1\rangle|k\rangle )\] \\
    step 5: Measure the second register to obtain \( l \), collapsing the first register to:
    \[
    \begin{aligned}
    |\psi_l\rangle &= \frac{1}{\sqrt{2}} \left( e^{2\pi ilx_0/2^n}|0\rangle + e^{2\pi il(x_0+a)/2^n}|1\rangle \right) \\
    &= \frac{1}{\sqrt{2}} e^{2\pi ilx_0/2^n} \left( |0\rangle + e^{2\pi ila/2^n}|1\rangle \right) \\
    &= |0\rangle + e^{2\pi ila/2^n}|1\rangle
    \end{aligned}\]\\
    step 6: \( |\psi_{2^{n-1}}\rangle \) is obtained by the sieve method proposed by Kuperberg, with $l$ equal to $2^{n-1}$, at which point: 
    \[|\psi_{2^{n-1}}\rangle=|0\rangle+e^{\pi ia}|1\rangle\]\\ 
    step 7: Apply a Hadamard gate, when the last bit of $a$ is an even measurement yields $0$ and the last bit is an odd measurement yields $1$.
\end{algorithm}

Obviously, a single run of Algorithm 1 yields the last bit $a_0$ of $a$. By \ref{Property 2}, the functions \(f^{'}(x) = f(2x)\) and \(g^{'}(x) = g(2x + a_{0})\) can be constructed. We running Algorithm 1 again on the basis of the new functions constructed gives the last bit of \(a^{'} = \frac{(a - a_{0})}{2}\) , which is the penultimate bit of $a$. Recursing this procedure yields the remaining bits of $a$ obtained.

It is worth noting that the sieving method used in step 6 of the algorithm is the most central part of the Kuperberg's algorithm flow, which can reduce the complexity of the algorithm to the sub-exponential level. This is because there are $2^n$ possibilities of l obtained by measurement in step 5. In order to obtain $l=2^{n-1}$, multiple measurements are needed and its time complexity is exponential, which will lose the advantage of quantum algorithm. In order to retain the advantages of quantum algorithms, Kuperberg proposed the sieving method, whose specific steps are described as follows:

Firstly, steps 1-5 of algorithm \ref{Algorithm 1} are used as black boxes to generate states \(|\left. \psi_{l} \right\rangle\). We set \(m\) nodes, \(m = \left\lceil \sqrt{n - 1} \right\rceil\). Through node 1 (measuring l have you got information), find out and quantum state \(l\) bits of \(|\psi_{l_{1}}\rangle\) lowest \(m\) bits of the same quantum state \(l\) bits of \(|\psi_{l_{2}}\rangle\), and the quantum state \(l\) bits of \(|\psi_{l_{1}}\rangle\) and \(|\psi_{l_{2}}\rangle\) execution with operation: CNOT was performed on \(|\psi_{l_{2}}\rangle\) with \(|\psi_{l_{1}}\rangle\) as the controlled bit, and then the second register was measured. At this time, the first register collapsed to \(|\left. \psi_{l_{1} \pm l_{2}} \right\rangle = |\left.  0 \right\rangle + e^{2\pi i(l_{1} \pm l_{2})a/2^{n}}|\left.  1 \right\rangle\). We get \(|\psi_{l^{'}}\rangle = |\left. \psi_{l_{1} - l_{2}} \right\rangle\) with \(50\%\) probability. Secondly, taking \(|\psi_{l^{'}}\rangle\) as the input state of node 2, after enough states are accumulated in node 2, the states with the same lowest \(m + 1,\cdots,2m\) bits are screened from these states, and the combined operation is performed again.The obtained quantum state is taken as the input to the next node, and so on, until the node \(m\) is executed, because \(m \times m = n - 1\), only two quantum states are left in the last node \(|\psi_{0}\rangle\)
and\(|\left. \psi_{2^{n - 1}} \right\rangle\). In other words, all the bits of the quantum state in the node are 0 except the most significant bit. Therefore, repeating this sieving step many times can obtain \(|\left.\psi_{2^{n - 1}} \right\rangle\)  with high probability.

Finally, the time complexity of Kuperberg's algorithm is
briefly analyzed. The complexity of this algorithm is determined by two
main components: first, the number of bits in the hidden shift \(a\) bit. From Property \ref{pr_1}, the number of bits of \(a\)
is \(n = \left\lceil \log_{2}{|Z_{N}|} \right\rceil\) , then Algorithm 1 needs to perform \(O(n)\) iterations in total. The second is the complexity of the sieving method to get \(|\left. \ \psi_{2^{n - 1}} \right\rangle\) . From the specific
steps of the sieving method, it can be seen that each node needs at
least 4 states in order to output 1 state to the next node, and the current node needs to exist at least \(2^{m}\) states in order to find two states that meet the combination conditions with a probability of
50\%. Therefore node 1 needs at
least \(8^{m} \times 2^{m} = 2^{O(\sqrt{n})}\) states to
get \(|\left. \psi_{2^{n - 1}} \right\rangle\) with high probability.
The total time complexity
is \(2^{O(\sqrt{n})} + 2^{O\left( \sqrt{n - 1} \right)}\ldots\ldots{+ 2}^{O\left( \sqrt{2} \right)} = 2^{O(\sqrt{n})}\)
and the space complexity is also \(2^{O(\sqrt{n})}\).

\section{Distributed Kuperberg's algorithm}\label{sec3}
In this section, the proposed distributed Kuperberg's algorithm is presented. First, assume that there are \(2^{t}\) distributed quantum computing nodes, and each node is denoted as \(Node\ w\), \(w \in \{ 0,1\}^{t}\). Secondly, the definitions and theorems related to the algorithm are given.

\begin{definition}
Let the original function $f,g:Z_{N} \rightarrow R$ satisfy $f(x) = g(x + a)$, where $a \in Z_{N}$ is a hidden shift. $Z_{N}$ is decomposed into the input space t prefix and $(n - t)$ a suffix, $u \in \{ 0,1\}^{n - t},w \in \{ 0,1\}^{t}$, define a function:
\begin{align}
f_{w}(u)=f(w||u),g_{w}(u)=g(w||u)
\end{align}
Where $w||u$ denotes the string concatenation operation. Each subfunction $f_w$ and $g_w$ is assigned to an independent quantum Node, $Node w$, which only needs to process $(n-t)$ bits of input.
\end{definition}

\begin{definition}
For all $u \in \{ 0,1\}^{n - t}$, there exist sets $H(u)$ and $R(u)$ containing the subfunction values generated by all nodes, respectively:
\begin{align}
H(u)=\{f_{w}(u)\mid w\in\{0,1\}^{t}\},R(u)=\{g_{w}(u)\mid w\in\{0,1\}^{t}\}
\end{align}
\end{definition}
Due to the injective property of $f$ and $g$, the elements of $H(u)$ and $R(u)$ do not repeat each other. However, directly measuring these sets cannot directly obtain the information of hidden shift a, for two reasons: (1) the calculation results of different nodes are independent of each other, and the correlation across nodes cannot be directly established. (2) The elements in the sets $H(u)$ and $R(u)$ do not establish an explicit correspondence with the hidden shift $a$. In order to establish the relationship between subfunction values and hidden shifts, it is necessary to sort the set elements globally. Define the sorted strings $F(u)$ and $G(u)$ as follows:
\begin{definition}
For all $u \in \{ 0,1\}^{n - t}$ , the sorted strings $F(u)$ and $G(u)$ are as follows.
\begin{equation}
\begin{aligned}
F(u) &= f(w_0||u)||f(w_1||u)\cdots||f(w_{2^t-1}||u) \in \{0,1\}^{2^t m}, \\
G(v) &= g(w_0||v)||g(w_1||v)\cdots||g(w_{2^t-1}||v) \in \{0,1\}^{2^t m}.
\end{aligned}
\end{equation}
$f(w_0||u)\leqslant f(w_1||u)\leqslant\cdots\leqslant f(w_{2^t-1}||u),g(w_0||v)\leqslant g(w_1||v)\leqslant\cdots\leqslant g(w_{2^t-1}||v),w_i\in\{0,1\}^t$ When $i \neq j$, $w_{i} \neq w_{j}$.
\end{definition}
The sorting operation ensures that the generation of $F(u)$ and $G(u)$ depends only on the function value itself and is independent of the computing node, thus eliminating the randomness of the data distribution between nodes.
\setcounter{theorem}{0} 
\begin{theorem}\label{Theorem 1}
Let \(a = a_{1} \parallel a_{2}\), where \(a_{1} \in \{ 0,1\}^{t}\), \(a_{2} \in \{ 0,1\}^{n - t}\), for all \(u \in \{ 0,1\}^{n - t}\), \(v \in \{ 0,1\}^{n - t}\), there
exists \(a_{2}\) such that:
\begin{align}
F(u)=G(v) \text{ if and only if } u+a_2=v
\end{align}
\end{theorem}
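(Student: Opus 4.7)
The plan is to prove both directions by turning the equality of sorted strings into an equality of sets of function values, and then applying the hidden shift identity $f(x)=g(x+a)$ together with the injectivity of $f$ and $g$. The essential technical ingredient is the behavior of the prefix/suffix decomposition under addition by $a=a_1\|a_2$: writing $w\|u = w\cdot 2^{n-t}+u$ and reducing modulo $2^n$,
\[
(w\|u)+a \;=\; \bigl((w+a_1+c)\bmod 2^t\bigr)\,\|\,\bigl((u+a_2)\bmod 2^{n-t}\bigr),
\]
where $c\in\{0,1\}$ is the carry generated by $u+a_2$. For a fixed $u$ the carry $c$ is determined, and the prefix map $w\mapsto(w+a_1+c)\bmod 2^t$ is a bijection of $\{0,1\}^t$; this observation drives the whole argument.

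For the forward direction, assume $u+a_2=v$ in $Z_{2^{n-t}}$. The displayed identity together with $f(x)=g(x+a)$ gives, for every $w\in\{0,1\}^t$, an equality $f(w\|u)=g(w'\|v)$ with $w'=(w+a_1+c)\bmod 2^t$. Since $w\mapsto w'$ is a permutation of $\{0,1\}^t$, the sets $H(u)=\{f(w\|u):w\in\{0,1\}^t\}$ and $R(v)=\{g(w'\|v):w'\in\{0,1\}^t\}$ coincide. Because $F(u)$ and $G(v)$ are by construction the sorted concatenations of these equal sets, $F(u)=G(v)$ follows immediately.

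For the reverse direction, suppose $F(u)=G(v)$. Sorted concatenation is a canonical form for a multiset of $m$-bit blocks, so $F(u)=G(v)$ forces $H(u)=R(v)$ as multisets; injectivity of $f$ and $g$ then upgrades this to equality of sets. Fix any $w\in\{0,1\}^t$: the inclusion $f(w\|u)\in R(v)$ supplies some $w''\in\{0,1\}^t$ with $f(w\|u)=g(w''\|v)$, while the hidden shift identity simultaneously gives $f(w\|u)=g\bigl(((w+a_1+c)\bmod 2^t)\,\|\,((u+a_2)\bmod 2^{n-t})\bigr)$. Injectivity of $g$ forces these two preimages to agree bit-for-bit, and reading off the $(n-t)$-bit suffix yields $v=u+a_2$ in $Z_{2^{n-t}}$.

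The main obstacle I anticipate is the carry bookkeeping: one must verify that the shifted prefix map $w\mapsto(w+a_1+c)\bmod 2^t$ remains a permutation of $\{0,1\}^t$ \emph{irrespective} of whether $u+a_2$ overflows $2^{n-t}$, so that the set equality $H(u)=R(v)$ holds uniformly in $u$. Once this is nailed down, both implications reduce to short consequences of the injectivity of $g$ together with the slogan that sorted representatives of equal multisets are equal.
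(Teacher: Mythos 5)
Your proposal is correct and follows essentially the same route as the paper's proof: both directions are reduced to the set equality $H(u)=R(v)$ via the sorted-concatenation canonical form, and then resolved with the hidden shift identity $f(x)=g(x+a)$ under the prefix/suffix decomposition of addition by $a$. Your treatment of the carry via a single parameter $c$ and the explicit observation that $w\mapsto(w+a_1+c)\bmod 2^t$ is a bijection is a cleaner packaging of the paper's case split into "carry" and "no carry," and your use of the injectivity of $g$ to match preimages in the reverse direction makes explicit a step the paper leaves implicit.
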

\begin{proof}
Due to the injectivity of $f$ and $g$, the elements in $H(u)$ and $R(u)$ are all distinct for \(\forall u,\upsilon \in \{ 0,1\}^{n - t}\). \(H(u) = R(v)\) if and only if \(F(u)\)=\(G(v)\), which is equivalent to proving that $F(u) = G(v)$ if and only if $u + a_2 = v$. Let $w + a_1 = w'$:
Necessity: $\forall z \in H(u)$,  $\exists w \in \{ 0,1\}^{t},z = f\left( w||u \right)$ . Since $f\left( w||u \right) = g\left( w||u + a \right),z = g\left( w||u + a \right)$. When $u + a_{2}$ has no carry, $z = g\left( w + a_{1}||u + a_{2} \right)=g\left( w^{'}||v \right) \in R(v)$. When $u + a_{2}$ has a carry, $z = g\left( w + a_{1}||u + a_{2} \right)=g\left( w^{'} + 1||v \right) \in R(v)$ . Therefore, $H(u) \subseteq R(v)$. By the same reasoning, $R(v) \subseteq H(u)$, so $F(u)=G(v)$.
Sufficiency: When $F(u) = G(v)$, $\forall z \in H(u),\exists w \in \{ 0,1\}^{t},z = f\left( w||u \right)=g\left( w^{'}||u + a \right)$. When $u + a_2$ has no carry, $w || u + a = w' || v, w + a_1 = w', u + a_2 = v$. When $u + a_2$ has a carry, $w || u + a = w' + 1 || v, u + a_2 = v$. In conclusion, $u + a_2 = v$.
\end{proof}
According to Theorem\ref{Theorem 1} the core of the distributed Kuperberg's algorithm is to obtain \(F(u)\) and \(G(v)\) , calculate the corresponding subfunctions through different nodes, and then use the quantum sorting algorithm to obtain \(F(u)\) and \(G(v)\) , and then extract the last bits of \(a_{2}\) through quantum Fourier transform (QFT). According to \textbf{property 2} we can modify the original
functions \({f_{w}}^{'}(u) = f_{w}(2u)\) and \({g_{w}}^{'}(u) = g_{w}(2u + a_{0})\), where \(a_{0}\) is the last bit of \(a_{2}\), run the algorithm again to get the remaining bits of \(a_{2}\), and then recursively get \(a_{1}\) by the above formula,
and finally recover all the information of \(a\). For easy understanding, the distributed Kuperberg's algorithm for two nodes, i.e. the case of  \(t = 1\), is firstly given as shown in Algorithm \ref{algorithm2}, and its corresponding circuit diagram is shown in Fig.\ref{fig1}.
\begin{algorithm}[!ht]
\setlength{\abovedisplayskip}{1.2pt}
\setlength{\belowdisplayskip}{1.2pt}
    \caption{Distributed Kuperberg's algorithm (2 nodes)}\label{algorithm2}
	
	\KwIn{the sub-function Oracles \({O_{f}}_{w}\) of the function \(f:Z_{N} \rightarrow R\) with the sub-function Oracles \({O_{g}}_{w}\) of \(g:Z_{N} \rightarrow R\), \(R\) for any set, and \(w \in \{ 0,1\}^{2}\).}
	\KwOut{Hide the last bit \(a_{0}\) of \(a_{2}\) in the shift \(a = a_{1}||a_{2}\), \(a_{1} \in \{ 0,1\}^{1}, a_{2} \in \{0,1\}^{n - 1}\).}
    step 1: Prepare the quantum state \(|0\rangle|0^{\bigotimes n - 1} \rangle\left.|0^{\bigotimes m} \right\rangle\left. |0^{\bigotimes m} \right\rangle\left. |0^{\bigotimes 2m} \right\rangle\) and apply \(H^{\bigotimes(n)}\) to registers 1 and 2:
    \[\frac{1}{\sqrt{2}}( |0\rangle + |1 \rangle)\frac{1}{\sqrt{2^{n - 1}}}\sum_{u \in {\{ 0,1\}}^{n - 1}}^{}{|u \rangle|  0 \rangle}| 0 \rangle|0 \rangle\]\\
    step 2: Query the oracle: 
    \[\frac{1}{\sqrt{2}}\frac{1}{\sqrt{2^{n - 1}}}\sum_{u \in {\{ 0,1\}}^{n - 1}}^{}{(|  0 \rangle | u \rangle\left| \left. f_{1}(u) \right\rangle|f_{0}(u) \right\rangle| 0 \rangle + \left|  1 \right\rangle | u \rangle | g_{1}(u) \rangle| g_{0}(u) \rangle| 0 \rangle)}\]
    step 3: Act on registers 3 and 4 at \(U_{sort}\) and store the result after
sorting registers 3 and 4 in register 5:  
    $$\frac{1}{\sqrt{2}}\frac{1}{\sqrt{2^n}}\sum_{u \in \{0,1\}^{n-1}}(|0\rangle|u\rangle|f_1(u)\rangle|f_0(u)\rangle|S(u)\rangle + |1\rangle|u\rangle|g_1(u)\rangle|g_0(u)\rangle|T(u)\rangle)$$  \\
    step 4: Measure register 5, collapsing register 2 to \(u_{0}\) and backing out registers 3 and 4 :
    $$\frac{1}{\sqrt{2}}(|0\rangle|u_0\rangle + |1\rangle|u_0 + a_2\rangle)$$ \\
        step 5: Apply QFT to register 2:
    $$\frac{1}{\sqrt{2^n}}(\sum_{j=0}^{2^{n-1}-1}e^{2\pi iju_0/2^{n-1}}|0\rangle|j\rangle + \sum_{k=0}^{2^{n-1}-1}e^{2\pi ik(u_0+a_2)/2^{n-1}}|1\rangle|k\rangle)$$\\

    step 6: Measure register 2 and get \(l\), Register 1 collapses to: 
    \[
    \begin{aligned}
    |\psi_l\rangle &= \frac{1}{\sqrt{2}} \left( e^{2\pi ilu_0/2^n}|0\rangle + e^{2\pi il(u_0+a_2)/2^n}|1\rangle \right) \\
    &= \frac{1}{\sqrt{2}} e^{2\pi ilu_0/2^n} \left( |0\rangle + e^{2\pi ila_2/2^n}|1\rangle \right) = |0\rangle + e^{2\pi ila_2/2^n}|1\rangle
    \end{aligned}\]\\
    step 7: Through the sieve method to get \(|\psi_{2^{n - 2}} \rangle\) , \(l\) is equal to \(2^{n - 2}\) , using the \(H\) door, when the last digit of \(a_{2}\) is an even number of measurements to get \(0\), The last digit is an odd
    number of measurements to get 1.
\end{algorithm}
\newpage
\begin{figure}[h]
\centering
\includegraphics[width=0.9\textwidth]{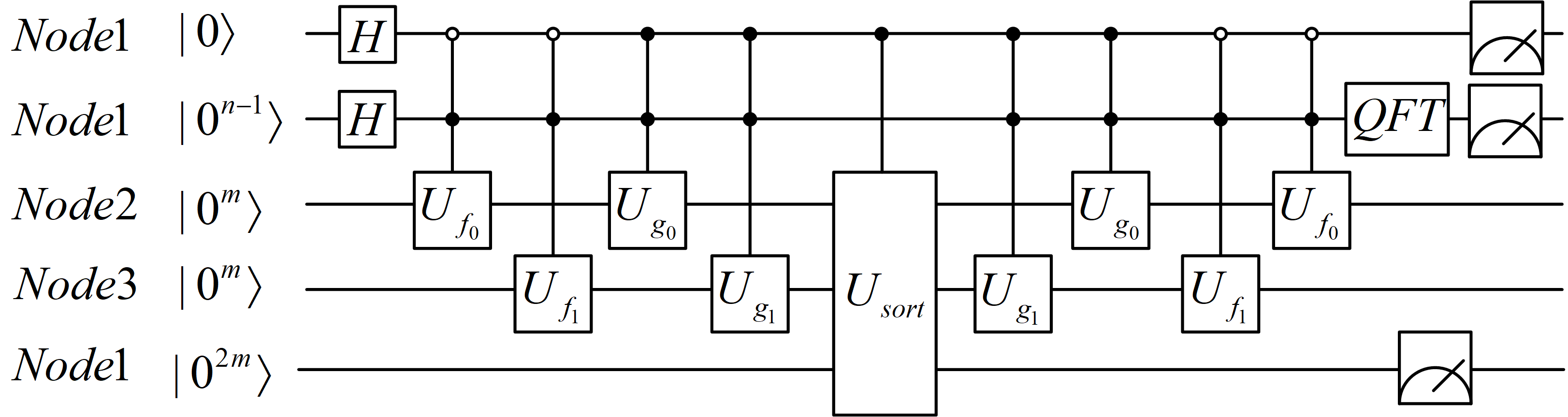}
\caption{Distributed Kuperberg algorithm (2-node)}\label{fig1}
\end{figure}
Where, \(U_{sort}\) implements the comparison of the values of two
registers, after which the result of sorting in dictionary order and the
third register are subjected to an all-or-nothing operation, with the
expression
\begin{equation}
\begin{aligned}
U_{sort}|q\rangle|p\rangle|r\rangle=
\begin{cases}
|q\rangle|p\rangle|r\otimes(q||p)\rangle,p\geq q \\
|q\rangle|p\rangle|r\otimes(p||q)\rangle,p\leq q & 
\end{cases},p,q\in\{0,1\}^{m},r\in\{0,1\}^{2m}
\end{aligned}
\end{equation}
Obviously, \(U_{sort}\) does not change the value of the comparison, but only does a controlled operation based on the result of the registers. A controlled quantum gate operation between two different nodes can be realized using quantum stealth transmutation\cite{Bennett-QT-PRL-1993}.
Further, Algorithm 2 can be extended to \(2^{t}\) nodes: replacing
registers 3 and 4 with \(2^{t}\) control registers corresponding
to \(2^{t}\) distributed computing nodes. \(U_{sort}\) The implementation
of the algorithm is changed to use the sorting network
\cite{Paterson-ISN-ALGO-1990} to perform a dictionary order sorting
operation on the \(2^{t}\) control registers, and the \(2^{t}\) elements can be sorted with the complexity of \(O(t)\). The specific algorithm
flow is described in Algorithm \ref{algorithm3}.

\begin{algorithm} [H]
\setlength{\abovedisplayskip}{1.2pt}
\setlength{\belowdisplayskip}{1.2pt}
    \caption{Distributed Kuperberg's algorithm ($t$ nodes)}\label{algorithm3}
	
	\KwIn{a positive integer \(N\)=\(2^{n}\) , with the number of
nodes \(T = 2^{t}\) , the sub-function Oracle \({O_{f}}_{w}\) of the
function \(f:Z_{N} \rightarrow R\) with the sub-function
Oracle \({O_{g}}_{w}\) of \(g:Z_{N} \rightarrow R\) , \(R\) for any
set, \(w \in \{ 0,1\}^{t}\).}
	\KwOut{Hide the last bit \(a_{0}\) of \(a_{2}\) in the shift \(a = a_{1}||a_{2}\) , \(a_{1} \in \{ 0,1\}^{t}, a_{2} \in \{0,1\}^{n - t}\).}
    step 1: Prepare the quantum state$|0\rangle|0\rangle|0^{\otimes m}\rangle|0^{\otimes m}\rangle\ldots |0^{\otimes m}\rangle|0^{\otimes 2^m}\rangle$ and apply \(H^{\bigotimes(n)}\) to registers 1 and 2:
    $$\frac{1}{\sqrt{2}}(|0\rangle+|1\rangle)\frac{1}{\sqrt{2}}\sum_{u\in\{0,1\}^{n-t}}|u\rangle|0^{\otimes m}\rangle\otimes|0^{\otimes m}\rangle...\otimes|0^{\otimes2^{t}m}\rangle$$\\
    step 2: Query the oracle: 
\[\begin{aligned}
&\frac{1}{\sqrt{2}} \cdot \frac{1}{\sqrt{2}} \sum_{u \in \{0,1\}^{n-t}} \Big( |0\rangle|u\rangle |f_{w_0}(u)\rangle|f_{w_1}(u)\rangle\cdots|f_{w_{2^t-1}}(u)\rangle|0\rangle \\
&+ |1\rangle|u\rangle |g_{w_0}(u)\rangle|g_{w_1}(u)\rangle\cdots|g_{w_{2^t-1}}(u)\rangle|0\rangle \Big)
\end{aligned}
\]
step 3: According to the result after sorting the last register by
acting \(U_{sort}\) on the 3rd -\((2^{n - 1} + 2)\)th register:  
    \[\begin{aligned}
&\frac{1}{\sqrt{2}} \cdot \frac{1}{\sqrt{2}} \sum_{u \in \{0,1\}^{n-t}} \Big( |0\rangle|u\rangle |f_{w_0}(u)\rangle|f_{w_1}(u)\rangle\cdots|f_{w_{2^t-1}}(u)\rangle|S(u)\rangle \\
&+ |1\rangle|u\rangle |g_{w_0}(u)\rangle|g_{w_1}(u)\rangle\cdots|g_{w_{2^t-1}}(u)\rangle|T(u)\rangle \Big)
\end{aligned}\]  \\
    step 4: Measuring the last register, register 2 collapses to \(u_{0}\) and backs out the middle \(2^{t} - 1\) registers:
    $$\frac{1}{\sqrt{2}}(|0\rangle|u_0\rangle + |1\rangle|u_0 + a_2\rangle)$$ \\
    step 5: Apply QFT to register 2:
    $$\frac{1}{\sqrt{2^n}}(\sum_{j=0}^{2^{n-t}-1}e^{2\pi i j u_0/2^{n-t}}|0\rangle|j\rangle+\sum_{k=0}^{2^{n-t}-1}e^{2\pi ik(u_0+a_2)/2^{n-t}}|1\rangle|k\rangle)$$\\
    step 6: Measure register 2 and get \(l\), Register 1 collapses to: 
    \[
    \begin{aligned}
    |\psi_l\rangle &= \frac{1}{\sqrt{2}} \left( e^{2\pi ilu_0/2^n}|0\rangle + e^{2\pi il(u_0+a_2)/2^n}|1\rangle \right)  \frac{1}{\sqrt{2}} e^{2\pi ilu_0/2^n} \left( |0\rangle + e^{2\pi ila_2/2^n}|1\rangle \right) \\
    &= |0\rangle + e^{2\pi ila_2/2^n}|1\rangle
    \end{aligned}\]\\
    step 7: Through the sieve method to get \(|\psi_{2^{n - 2}} \rangle\) , \(l\) is equal to \(2^{n - t-1}\) , using the \(H\) door, when the last digit of \(a_{2}\) is an even number of measurements to get \(0\), The last digit i s an odd
    number of measurements to get 1.
\end{algorithm}
Similarly, by modifying the original function \({f_{w}}^{'}(u) = f_{w}(2u)\) and \({g_{w}}^{'}(u) = g_{w}(2u + a_{0})\), repeat running algorithm 3 to obtain the remaining bits of $a_2$, and then recursively obtain $a_1$ through the above formula, and finally recover all the information of $a$.

Finally, the time complexity of algorithm 3 is briefly analyzed, and the complexity of algorithm 2 is the case of \(t = 1\). The complexity of the algorithm of three also depends on two parts: one is the screening method, the time complexity of \(2^{O(\sqrt{n - t})}\), $2$ it is to get all $a_2$ bits need to be repeated \(O(n - t)\) time, two parts complexity multiplied to \(2^{O(\sqrt{n - t})} + 2^{O\left( \sqrt{n - t - 1} \right)}\ldots\ldots{+ 2}^{O\left( \sqrt{2} \right)} = 2^{O(\sqrt{n - t})}\). Similarly, to get\(a_{1}\). Similarly, the time complexity of $a_1$ is \(2^{O(\sqrt{t})} + 2^{O\left( \sqrt{t - 1} \right)}\ldots\ldots{+ 2}^{O\left( \sqrt{2} \right)} = 2^{O(\sqrt{t})}\), the overall complexity of: \(2^{O\left( \sqrt{n - t} \right)} + 2^{O(\sqrt{t})}\), when \(n \gg t\),\(2^{O(\sqrt{t})} \ll 2^{O\left( \sqrt{n - t} \right)}\), the complexity can be reduced to \(2^{O\left( \sqrt{n - t} \right)}\).

\section{Experiments}\label{sec4}
In this section, we further elucidate the correctness and effectiveness
of the algorithm by running the distributed Kuperberg’s algorithm on the
Qiskit version 0.44 platform. The functions defined in the experiments
are the one-shot functions \(f:\{ 0,1\}^{3} \rightarrow \{ 0,1\}^{4}\),  \(g:\{ 0,1\}^{3} \rightarrow \{ 0,1\}^{4}\), satisfying \(f(x) = g(x + a\ mod\ N)\), Their truth tables are shown in Table \ref{table1}.
\begin{table}[h]\label{table1}
\caption{\textbf{Truth table}}
\centering
\begin{tabular}{cccccc}
\toprule
$x$&$g(x)$&$f(x)$  & $x$& $g(x)$&$f(x)$  \\
\midrule
\(000\) & \(1001\) & \(1000\)  & \(100\) & \(0111\) &\(0101\)  
\\
\(001\) & \(1100\) & \(1001\)  & \(101\) & \(0011\) &\(0111\)  
\\
\(010\) & \(1010\) & \(1100\)  & \(110\) & \(0001\) &\(0011\)  
\\
\(011\) & \(0101\) & \(1010\)  & \(111\) & \(1000\) &\(0001\)  \\
\midrule
\end{tabular}
\end{table}
The number of experimental runs is 2048, the number of nodes in the distributed experiment is 2, the number of input qubits is 3, the number of input qubits in the original Kuperberg experiment is 4, and the parameter $a=111=7$. According to the DORCIS tool mentioned in Ref.\cite{Chun-DORCIS-EPRINT-2023}, the quantum circuit design of the function Oracle component of the traditional Kuperberg's algorithm and the distributed Kuperberg's algorithm is carried out, as shown in Fig.\ref{Figure2} and Fig.\ref{Figure3}
\begin{figure}
	\centering
	\begin{subfigure}{0.45\linewidth}
		\centering
		\includegraphics[width=0.9\linewidth]{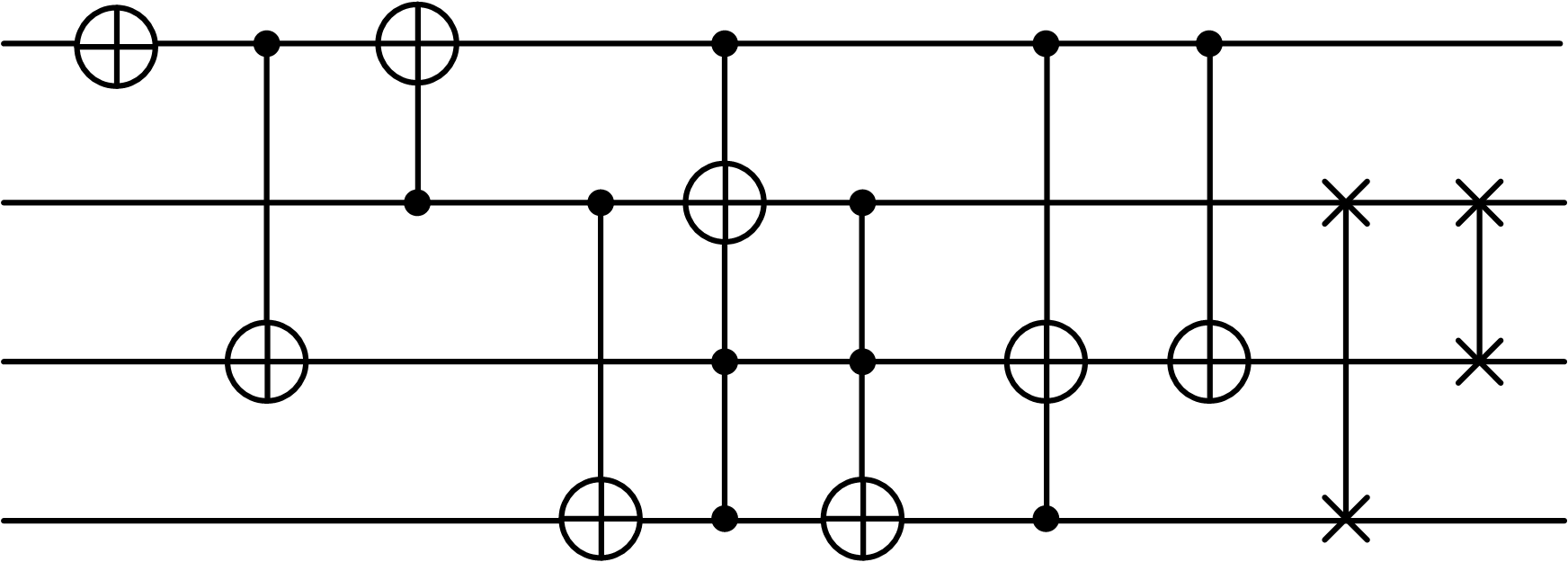}
		\caption{$O_{f}$Schematic of a quantum circuit}
		\label{chutian3}
	\end{subfigure}
	\centering
	\begin{subfigure}{0.45\linewidth}
		\centering
		\includegraphics[width=0.9\linewidth]{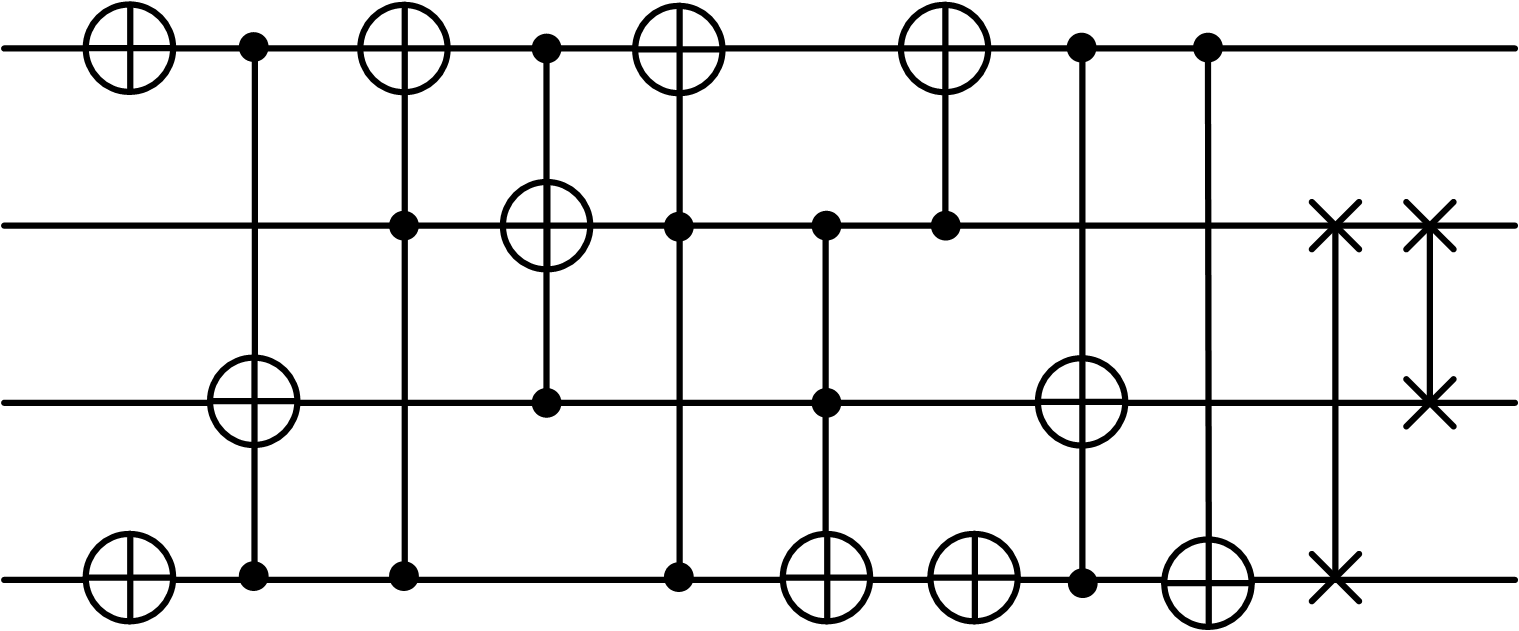}
		\caption{$O_{g}$Schematic of a quantum circuit}
		\label{chutian3}
	\end{subfigure}
	\caption{Schematic diagram of Oracle quantum circuit of original Kuperberg's algorithm}\label{Figure2}
	\label{da_chutian}
\end{figure}
\begin{figure}[h]
	\centering
	\begin{subfigure}{0.45\linewidth}
		\centering
		\includegraphics[width=0.75\linewidth]{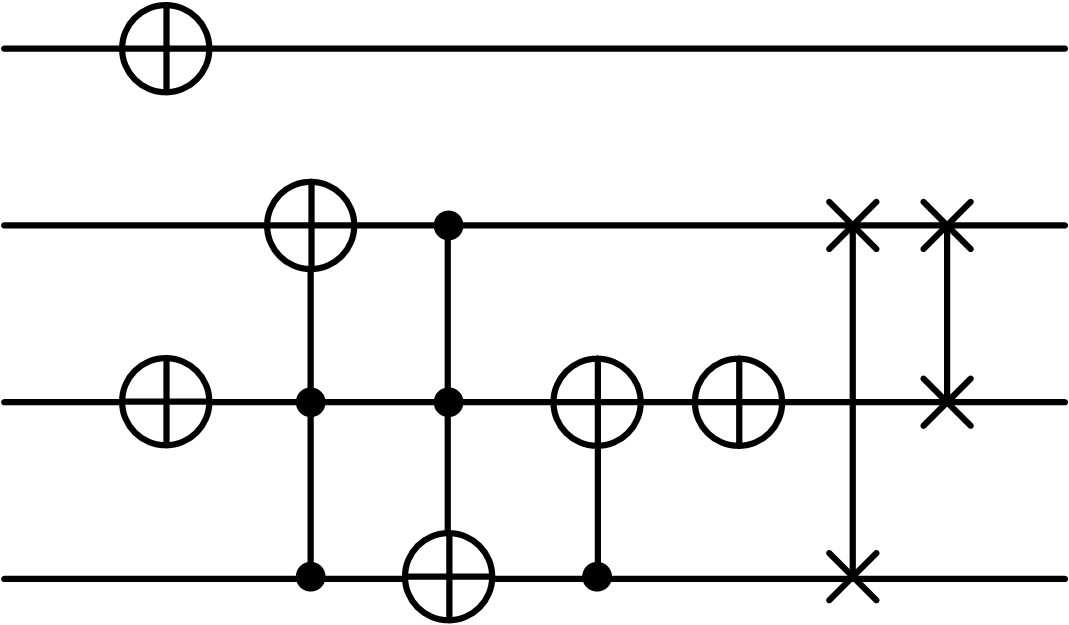}
		\caption{$O_{f_0}$Schematic of a quantum circuit}
		\label{chutian3}
	\end{subfigure}
	\centering
	\begin{subfigure}{0.45\linewidth}
		\centering
		\includegraphics[width=0.5\linewidth]{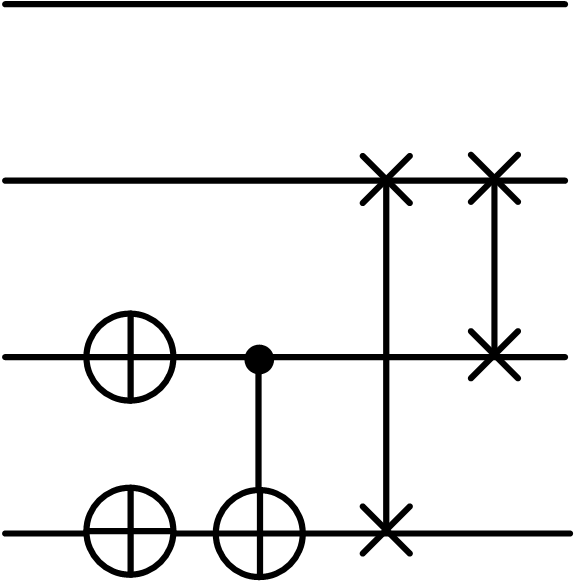}
		\caption{$O_{f_1}$Schematic of a quantum circuit}
		\label{chutian3}
	\end{subfigure}
    \qquad
    \centering
	\begin{subfigure}{0.45\linewidth}
		\centering
		\includegraphics[width=0.9\linewidth]{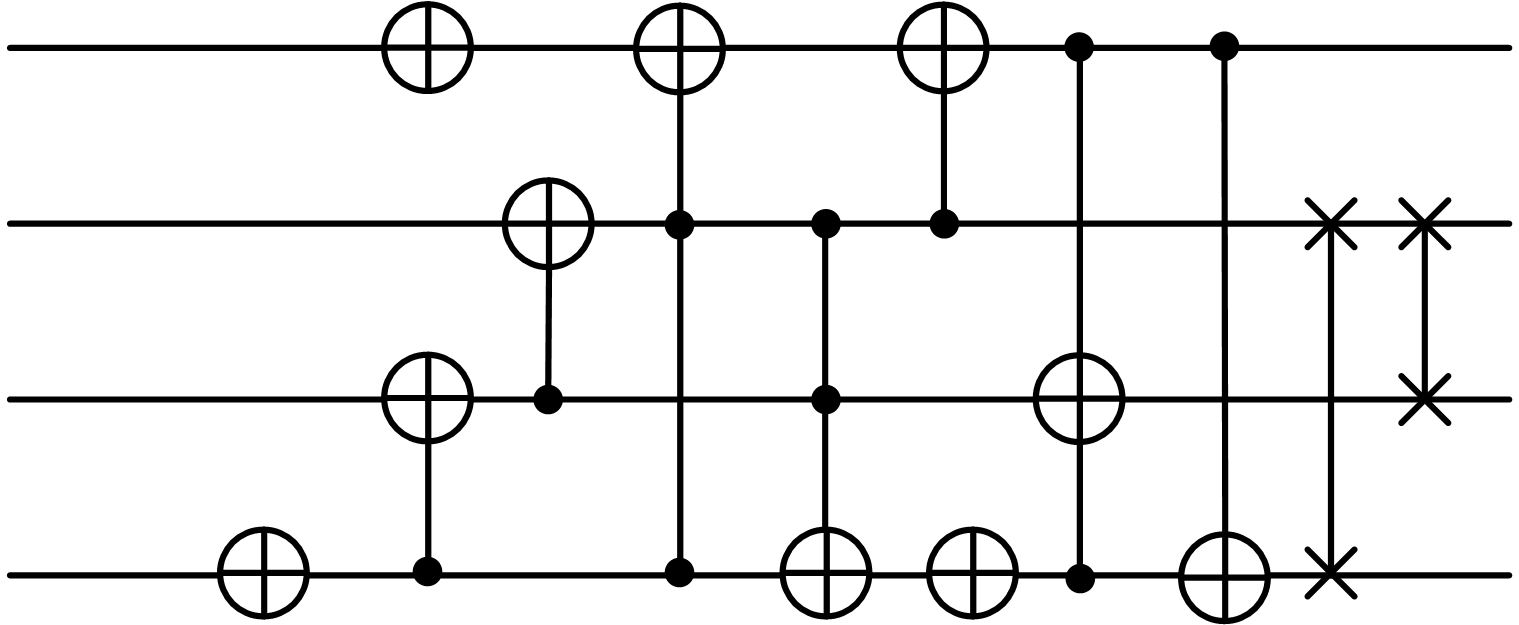}
		\caption{$O_{g_0}$Schematic of a quantum circuit}
		\label{chutian3}
	\end{subfigure}
	\centering
	\begin{subfigure}{0.45\linewidth}
		\centering
		\includegraphics[width=0.9\linewidth]{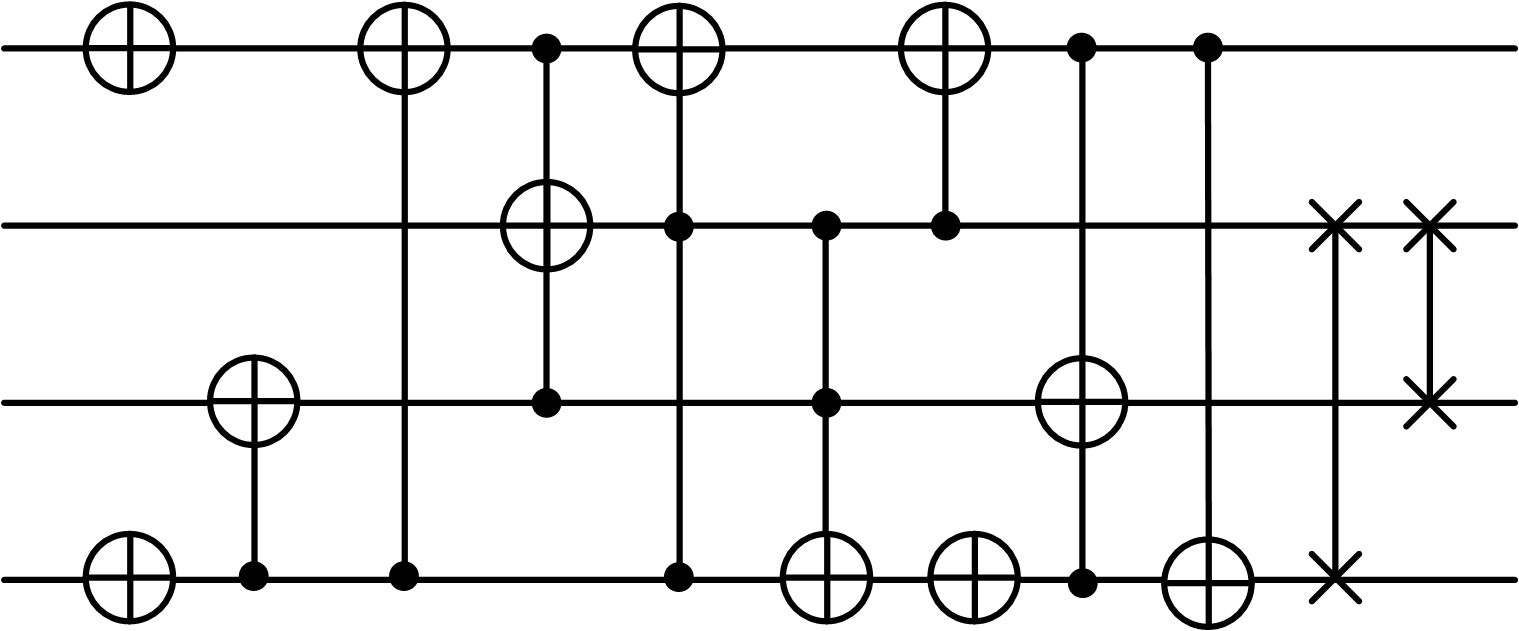}
		\caption{$O_{g_1}$Schematic of a quantum circuit}
		\label{chutian3}
	\end{subfigure}
	\caption{Schematic diagram of distributed Kuperberg's algorithm Oracle quantum circuits}
	\label{Figure3}
\end{figure}
The experimental procedure includes superposition state initialization,
Oracle query, quantum sorting \(U_{sort}\) application and QFT
measurement. The experimental results are shown in Figs.\ref{Figure4}, which indicate that the distributed version has a measurement success rate of
22.6\% in obtaining the same results, which is significantly higher than
that of the traditional algorithm, which is 10.6\%. The circuit depth of
a single node of the distributed architecture is reduced by 22\% on
average. The experiments verify the correctness of the distributed
Kuperberg algorithm, which has a greater advantage in reducing the
resource requirements and improving the probability of success.
\begin{figure}[h]
	\centering
	\begin{subfigure}{0.45\linewidth}
		\centering
		\includegraphics[width=0.9\linewidth]{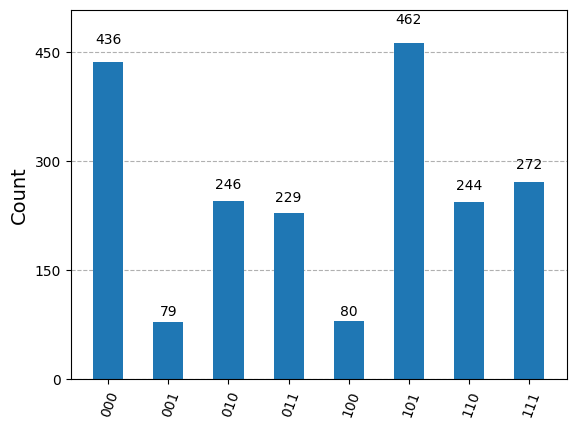}
		\caption{Distributed Kuperberg's algorithm with input bit 3 and correct result 101}
		\label{chutian3}
	\end{subfigure}
	\centering
	\begin{subfigure}{0.45\linewidth}
		\centering
		\includegraphics[width=0.9\linewidth]{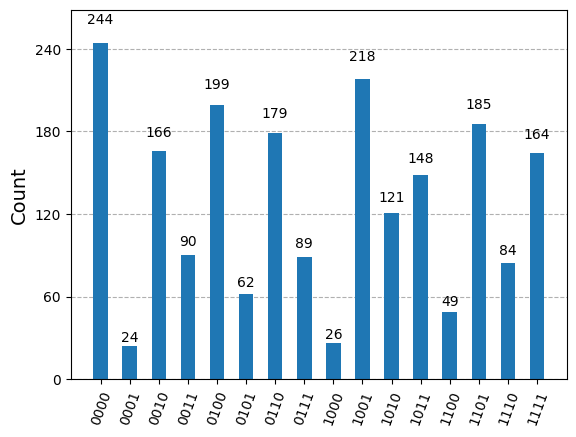}
		\caption{ Kuperberg's
algorithm with input bit 4 and correct result 1001}
		\label{chutian3}
	\end{subfigure}
	\caption{Comparison of experimental results}
	\label{da_chutian}
\end{figure}

\section{Conclusion}\label{sec5}
In this paper, a novel Kuperberg algorithm for distributed quantum computing environments is proposed, aiming to solve the problem of excessive quantum resource demand of the traditional algorithm. By decomposing the original function into multiple subfunctions and assigning them to different quantum nodes for parallel processing, the algorithm significantly reduces the depth of the quantum circuit and the number of qubits in a single node, while optimizing the time complexity. Theoretical analysis shows that the distributed version reduces the complexity from \(2^{O\left( \sqrt{n} \right)}\) to \(2^{O\left( \sqrt{n - t} \right)}\)  ($t$ is the number of nodes), and the advantage is especially obvious when the number of nodes is much smaller than the problem size. The experimental part validates the accuracy of the algorithm on the Qiskit platform, and the results show that the distributed version not only reduces the depth of the circuit, but also reduces the noise impact through parallelization. Additionally, the probability of measurement success is higher than that of the traditional method. This work provides new ideas for efficiently solving DHSP on NISQ devices. Future research can further explore more flexible function decomposition strategies and cross-node communication optimization to enhance the scalability and practicality of the algorithm.

\bibliography{ref}

\end{document}